\newtheorem{theorem}{Theorem}
\newtheorem{corollary}{Corollary}
\newtheorem{lemma}[theorem]{Lemma}
\begin{document}

\title{A Logical Proof of Quantum Correlations Requiring Entanglement Measurements}

\author{Adel Sohbi}
\email{sohbi@kias.re.kr}
\author{Jaewan Kim}
\affiliation{School of Computational Sciences, Korea Institute for Advanced Study, Seoul 02455, Korea}

%%%%%%%%%%%%%%%%%%%%%%%%%%%%%%%%%%%%%%%%%%%%%%%%%%%%%%%%%%%%%%%%%%%%%%%%%%%%%%%
\begin{abstract}
We present a logical type of proof of contextuality for a two-qubit state. We formulate a paradox that cannot be verified by a two-qubit system with local measurements while it is possible by using entanglement measurements. With our scheme we achieve $p_{\rm Hardy} \approx 0.167$, which is the highest probability obtained for a system of similar dimension. Our approach uses graph theory and the global exclusivity principle to give an interpretation of logical type of proofs of quantum correlations. We review the Hardy paradox and find connection to the KCBS inequality. We apply the same method to build a paradox based the CHSH inequality.
\end{abstract}
%%%%%%%%%%%%%%%%%%%%%%%%%%%%%%%%%%%%%%%%%%%%%%%%%%%%%%%%%%%%%%%%%%%%%%%%%%%%%%

\date{\today}
\maketitle

%\section{Introduction}

Contextuality refers to the property of measurements statistics to have their outcomes depending on the set of measurements that are simultaneously performed. Originally discovered by Bell, Kochen and Specker \cite{Bell:rmp66,KS:jmm67}, it uses the notion of compatibility, which refers to the property of several measurements to be performed simultaneously. Recent efforts have been made on understanding contextuality using a graph theory approach \cite{AB:njp11,FLS:qpl13,CSW:arXiv10,CSW:prl14}. Contextuality has shown an advantage in quantum computing \cite{AB09,raussendorf2013contextuality,HWVE:nat14,FRB:arXiv18}, in quantum cryptography \cite{SBKTP09,HHHHPB10}, quantum communication \cite{SHP:arXiv17,SC:arXiv18} and state discrimination \cite{SS:prx18}.

Contextuality generalizes non-locality \cite{FLS:qpl13} and the ability to experimentally observe these two fundamental properties of quantum physics is an important point when it comes to their uses in information technology tasks. It is then one of the most fundamental challenge to develop tools to characterize and classify measurements statistics belonging to different theories. A common technic to witness non-locality and contextuality is through inequalities of measurements statistics. Another possibility is by using logical arguments \cite{GHZ:89,GHZ:aps90,Hardy:prl93,Mermin:rmp93,WM:prl12}, where a set of conditions if satisfied assert the non-local or contextual nature of the system considered. While it is clear that a verification of non-locality or contextuality with a logical proof implies the violation of an inequality \cite{BCPSW:rmp14}, the inverse is not straightforward to answer.

When such properties are observed on spatially separated systems, entanglement is a needed resource. Entangled states and local measurements are a sufficient resource to observe non-locality. Moreover, a quantum measurement can also have entanglement properties when its eigenstates are entangled. This is a crucial key in the architectures of quantum network using quantum repeaters \cite{SSRG:rmp11}, where independent sources send entangled photons to distant parties. When entanglement measurements are performed on pair of photons from different sources it can create entanglement between initially uncorrelated parties. In particular Bell state measurements can create maximally entangled states and can also be self-tested device-independently \cite{RKB:arXiv18,BSS:arXiv18}.

Recent advances on the classification of correlations has been made with a graph theoretical approach \cite{CSW:arXiv10,FLS:qpl13,CSW:prl14}. A proof of quantum correlations can directly be made by using an inequality which is derived by the graphs properties. logical proofs of quantum correlations within the graph theory approach has been studied \cite{CBCB:prl13,SZDM:pra16}, but not yet widely known.

Contextuality relies on the global exclusivity principle \cite{Cabello:prl13}: the sum of probabilities of pairwise exclusive events cannot exceed 1. The exclusivity principle is a fundamental principle under which quantum mechanics stands. In particular, it provides the bounds under which quantum correlations exist \cite{Cabello:prl13,Yan:pra13,Cabello:pra14,BCC:pra14}. More recently, it has been proved that monogamy arises as a consequence of the exclusivity principle \cite{JCWGC:arXiv14}.

By using the global exclusivity principle, we develop a method to build a graph construction for logical proofs of non-locality and contextuality. After revisiting the Hardy paradox \cite{Hardy:prl93} to exhibit the graphic properties of logical proofs, we show how our technique can apply to generate a paradox from the CHSH inequality \cite{CHSH:prl69} in a similar way that has been done for the KCBS inequality \cite{KCBS:prl08,CBCB:prl13}. We show that this paradox cannot be verified by a two-qubit system with local measurements while it is possible with entanglement measurements. We finally provide a contextuality inequality which is violated when the paradox is verified.

This letter is structured as follows. We review the approach in \cite{CSW:prl14} to introduce how a graph can be used to represent measurement outcomes obtained in an experiment. Then, we revisit the Hardy paradox \cite{Hardy:prl93} for two-qubit and give an interpretation of this paradox in the graph theory approach by using the exclusivity principle. We apply our method to the CHSH inequality \cite{CHSH:prl69} in order to build a paradox. Finally, we compare three different resources to demonstrate this paradox: classical theory, two-qubit state with local measurements and with entanglement measurements.

\emph{Exclusivity Graph.---}In a given experiment, we can represent on a graph each possible event $e$ corresponding to the tuples of outcomes of a given set of compatible measurements. For instance if two dichotomic measurements (``0'' and ``1'' outcomes) denoted by $x$ and $y$ are compatible we have the following list of possible events: $(0,0\vert x,y),(0,1\vert x,y),(1,0\vert x,y),(1,1\vert x,y)$.
In the bipartite scenario, we define a graph $\mathcal{G}(V,E)$ in which we associate to each vertex an event $e$ denoted by $(a,b\vert x,y)$, where $a$ and $b$ are the measurement outcomes for tests $x$ and $y$ in an experiment. Pairs of exclusive events are represented by adjacent vertices. Two events are exclusive if they involve at least one same measurement with non consistent outcomes. For instance, we consider the two following events: $(0,0\vert x,y)$ and $(1,0\vert x,y')$, with $y \neq y'$, in this example the outcomes of the measurement $x$ are different in the two events, obviously, this cannot occur simultaneously. Hence, this two events are called exclusive. $\mathcal{G}$ is defined as the \textit{exclusivity graph of the experiment} \cite{CSW:prl14} . In quantum theory, exclusive events are represented by orthogonal projective measurements (see Lemma~\ref{lem:LO}). Contextuality inequalities can be build from the graph properties as explained in Appendix~\ref{sec:CSW14}.

\emph{Hardy Paradox.---}The Hardy Paradox \cite{Hardy:prl93} is a set of four probabilistic conditions in the bipartite scenario, where Alice and Bob have two dichotomic measurements (denoted by 0 and 1).
\begin{align}%\label{eq:HP}
& P(00|00)>0,\label{eq:HP1}\\
& P(00|01)=0,\label{eq:HP2}\\
& P(00|10)=0,\label{eq:HP3}\\
& P(11|11)=0,\label{eq:HP4}
\end{align}
where an event $(a,b\vert x,y)$ means that Alice uses her measurement $x \in \{0,1\}$ and gets the outcome $a \in \{0,1\}$ and Bod uses his measurement $y \in \{0,1\}$ and gets the outcome $b \in \{0,1\}$. We refer to this scenario as 2-2-2 scenario (two parties, two measurements with two outcomes).
It can be shown that the conditions in \cref{eq:HP1,eq:HP2,eq:HP3,eq:HP4} are not consistent with any Local Hidden Variable theory (LHV), however it is possible to find a quantum system and a set of measurements satisfying the above conditions \cite{Hardy:prl93}.

We aim to revisit the Hardy paradox with a graph theory approach, to do so, we consider the exclusivity graph of the experiment. This scenario (bipartite with two dichotomic measurements) is well known and the exclusivity graph of the experiment is already given in \cite{CSW:prl14}. From the exclusivity principle, we can write:
\begin{align}%\label{eq:HP_EP}
& P(00|01) + P(10|01) + P(01|11) + P(11|11) = 1,\label{eq:HP_EP1}\\
& P(01|10) + P(00|10) + P(11|11) + P(10|11) = 1.\label{eq:HP_EP2}
\end{align}
In quantum theory, the two above equations can be verified by using the completeness properties of the different measurements after using the Born rule. More generally, to illustrate this saturation of the exclusivity principle, we represent in Tab.~\ref{tab:SEPHP} (See Appendix~\ref{sec:SEPHP}) an exhaustive list of all possible events for this scenario. We can see that all the four events in each equations in \cref{eq:HP_EP1,eq:HP_EP2} cover all the 16 events of the CHSH scenario. Hence, their probabilities sum to one.

Because of the conditions in \cref{eq:HP2,eq:HP3,eq:HP4} \cref{eq:HP_EP1,eq:HP_EP2} can be simplified to:
\begin{align}%\label{eq:HP_EP_sat}
& P(10|01) + P(01|11) = 1,\label{eq:HP_EP_sat1}\\
& P(01|10) + P(10|11) = 1.\label{eq:HP_EP_sat2}
\end{align}
In a deterministic model, i.e., a model where each event $e_i$ has a probability $P(e_i)$ equals to $0$ or $1$, the \cref{eq:HP_EP_sat1,eq:HP_EP_sat2} imposes that $P(10|01)=1$ or $P(01|11) = 1$ and $P(01|10)=1$ or $P(10|11) = 1$. Under such model we have $P(00|00)=0$ and contradicts with \cref{eq:HP1} while this quantity could be non-zero by using a quantum resource \cite{Hardy:prl93}.

\begin{figure}[ht]
      \centering
      \includegraphics[width=6cm]{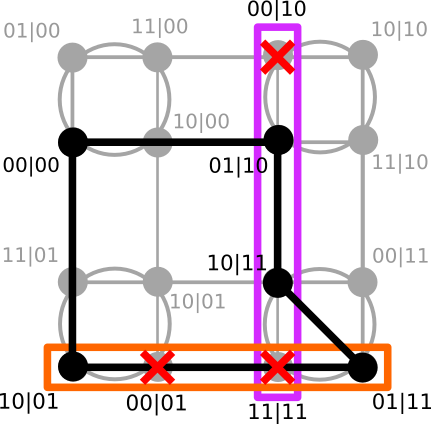}
      \caption{Exclusivity graph of the experiment for the 2-2-2 scenario \cite{CSW:prl14}. All events except the one appearing in \cref{eq:HP_EP_sat1,eq:HP_EP_sat2} and the event $(00|00)$ are in grey color. We can see that the subgraph formed by the remaining vertices is a pentagon. The two rectangles (orange and purple) represent the set of events appearing in the \cref{eq:HP_EP1,eq:HP_EP2}. The 'X' (red) represents the events with probability null imposed in the formulation of the paradox in \cref{eq:HP2,eq:HP3,eq:HP4}.}
      \label{fig:HP}
\end{figure}

In Fig.~\ref{fig:HP} are represented all of the 16 events of the 2-2-2 scenario. As in \cite{CSW:prl14}, for simplicity pairwise exclusive events are represented on straight lines or circles. Hence, events on a same line or circle follow the exclusivity principle, the sum of their probability is upper-bounded by 1. We have kept all the 16 events, however, all of the unwanted ones are in light grey color. The subgraph formed by the five events $\{(10|01),(01|11),(01|10),(10|11),(00|00)\}$ form a pentagon on the exclusivity graph of the experiment.

A deterministic model can be seen as a coloring problem where assigning a probability equal to 1 to two exclusive events is not possible, while other combinations are possible. A probability equal to 1 means that an event always occurs and two exclusive events cannot happen simultaneously. It is not possible to assign a probability 1 to the event $(00 \vert 00)$ while having an assignment satisfying the condition in \cref{eq:HP_EP_sat1,eq:HP_EP_sat2}. The five events form a pentagon which is the simplest non-perfect graph \cite{CRST:am06}, which is known to be a resource for building contextuality inequalities \cite{CSW:prl14}.

We recover the logical proof of contextuality developed in \cite{CBCB:prl13} which is based on the the KCBS inequality scenario \cite{KCBS:prl08}, $S_{KCBS} = \sum_{i=1}^5 P(e_i) \leq 2$. In this case the maximum violation of the KCSB inequality under the paradox conditions is given by $2+1/9 \approx 2.11$ \cite{CBCB:prl13}. When we use a separated system of two qubits (Hardy's paradox), this quantity is about $2.09$. This gap has also been observed for the violation of the KCBS without considering any paradox. In this case, the maximum violation of the KCBS inequality is $2.236$ for a qutrit system \cite{KCBS:prl08} whereas for a bipartite system it goes up to $2.207$ \cite{SBBC:pra13}.

\emph{Graphical construction of the CHSH Paradox.---}The CHSH inequality can be written with the probabilities of the events from its exclusivity graph $G_{\rm CHSH} = (V_{CHSH,}E_{\rm CHSH})$ as follows \cite{CSW:prl14},

\begin{equation}\label{eq:schsh}
  S_{\rm CHSH} = \sum_{i\in V_{\rm CHSH}} P(e_i) \leq 3.
\end{equation}

We now show how to build a paradox in a similar way as previously done for the Hardy paradox.
\begin{figure}[ht]
      \centering
      \includegraphics[width=6cm]{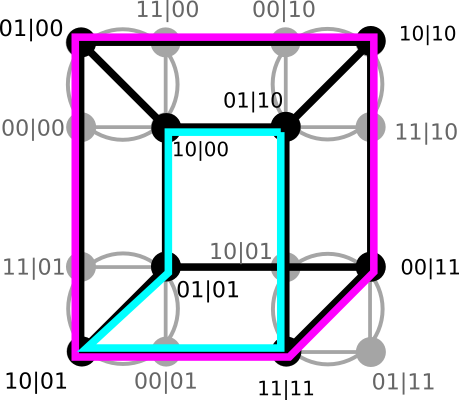}
      \caption{Exclusivity graph of the CHSH inequality in the form given in \cite{CSW:prl14}. The two pentagons are shown with two different colors.}
      \label{fig:CHSH_graph}
\end{figure}

 In Fig.\ref{fig:CHSH_graph}, is shown the exclusivity graph of the CHSH inequality. One can notice that by using two overlapping pentagons we can cover all the set of events in $G_{\rm CHSH}$. Individually, each pentagon forms a Hardy paradox and by assembling them one can ask whether a paradox which uses the events in the CHSH inequality can be formed. In order to do that we first need to formulate such  paradox by using the saturation of the exclusivity principle. Because the two pentagons are overlapping they share one exclusivity principle equation.

To illustrate the saturation of the exclusivity principle, we represent in Tab.~\ref{tab:SEPCHSH} (See Appendix~\ref{sec:SEPCHSH}) an exhaustive list of events for this scenario. The saturation of the exclusivity principle from the chosen pentagons in Fig.~\ref{fig:CHSH_graph} goes as follow:
\begin{align}%\label{eq:CHSH_EP_sat}
&P(10|00) + P(01|01) = 1,\label{eq:CHSH_EP_sat1}\\
&P(10|01) + P(11|11) = 1,\label{eq:CHSH_EP_sat2}\\
&P(01|10) + P(00|11) = 1.\label{eq:CHSH_EP_sat3}
\end{align}
\begin{figure}[ht]
      \centering
      \includegraphics[width=6cm]{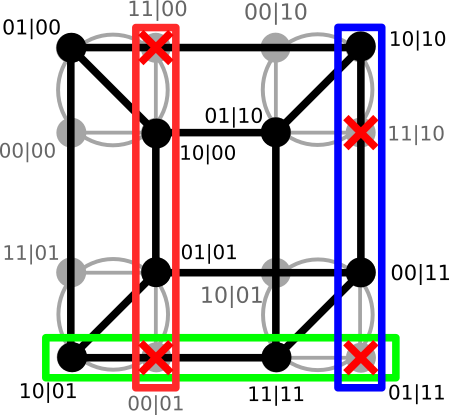}
      \caption{Exclusivity graph of the CHSH inequality in the form given in \cite{CSW:prl14}. All events except the one appearing in the inequality \cref{eq:schsh} are in grey color. The three rectangles (red, blue and green) represent the sets of events appearing in the \cref{eq:CHSH_EP_sat1,eq:CHSH_EP_sat2,eq:CHSH_EP_sat3}. The 'X' (red) represents the events with probability null to have the saturation of the exclusivity principle.}
      \label{fig:CHSH_exclu}
\end{figure}
The formulation of such assemblage of paradoxes is the following. We call the CHSH paradox the following sets of conditions in the 2-2-2 scenario:
\begin{align}%\label{eq:CHSH_paradox}
& P(01|00)+P(01|10)>0,\label{eq:CHSH_paradox1}\\
& P(11|00)=0,\label{eq:CHSH_paradox2}\\
& P(00|01)=0,\label{eq:CHSH_paradox3}\\
& P(11|10)=0,\label{eq:CHSH_paradox4}\\
& P(01|11)=0.\label{eq:CHSH_paradox5}
\end{align}
Interestingly, if the CHSH paradox is verified then the CHSH inequality in the form in \cref{eq:schsh} must be violated as well. Indeed if a physical system verifies the equations in \cref{eq:CHSH_EP_sat1,eq:CHSH_EP_sat2,eq:CHSH_EP_sat3} then $S_{\rm CHSH} = 3 + P(01|00) + P(01|10)$ which must be greater than $3$ because $P(01|00)+P(01|10)>0$.

We are now interested to know whether the conditions in \cref{eq:CHSH_paradox1,eq:CHSH_paradox2,eq:CHSH_paradox3,eq:CHSH_paradox4,eq:CHSH_paradox5} form a paradox that witnesses non-locality and contextuality. To do so, we can now verify which system has the possibility to verify all the conditions in \cref{eq:CHSH_paradox1,eq:CHSH_paradox2,eq:CHSH_paradox3,eq:CHSH_paradox4,eq:CHSH_paradox5}.

\emph{CHSH Paradox with classical resources.---}The CHSH paradox is the assemblage of two different Hardy paradoxes, \emph{i.e.}, two paradoxes built on pentagons. We already know that each of these paradoxes cannot be verified individually by a system obeying any classical theory and by extension the accumulation of the two neither.

\begin{theorem}\label{thm:CHSH_NCHV}
A Non-Contextual Hidden Variable theory (NCHV) cannot verify the CHSH Paradox.
\end{theorem}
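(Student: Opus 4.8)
The plan is to reduce the statement to deterministic assignments and then dispatch finitely many cases. Since any NCHV model is a convex mixture of deterministic non-contextual assignments, and since the four vanishing requirements \cref{eq:CHSH_paradox2,eq:CHSH_paradox3,eq:CHSH_paradox4,eq:CHSH_paradox5} ask for probabilities that are exactly zero, every deterministic assignment in the support of the mixture must individually satisfy all four equalities (a nonnegative average can vanish only if each term vanishes). Consequently the mixture value of $P(01|00)+P(01|10)$ is the average of its deterministic values, so it suffices to prove that no single deterministic assignment can satisfy \cref{eq:CHSH_paradox2,eq:CHSH_paradox3,eq:CHSH_paradox4,eq:CHSH_paradox5} while giving $P(01|00)+P(01|10)>0$; the strict inequality \cref{eq:CHSH_paradox1} would then be unreachable by the mixture.

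In the 2-2-2 scenario a deterministic non-contextual assignment is fixed by the four outcomes $a_0,a_1,b_0,b_1\in\{0,1\}$ that Alice and Bob return for their two measurements, with $(a,b\vert x,y)$ receiving probability $1$ exactly when $a=a_x$ and $b=b_y$. I would translate each condition into a Boolean statement about these bits: \cref{eq:CHSH_paradox1} becomes $b_0=1$ together with ($a_0=0$ or $a_1=0$), while the vanishing conditions become, respectively, $\lnot(a_0{=}1\wedge b_0{=}1)$, $\lnot(a_0{=}0\wedge b_1{=}0)$, $\lnot(a_1{=}1\wedge b_0{=}1)$ and $\lnot(a_1{=}0\wedge b_1{=}1)$. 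This turns the theorem into a small propositional satisfiability question.

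The core step is then a two-branch case analysis driven by \cref{eq:CHSH_paradox1}. If positivity is realized through $P(01|00)=1$ I set $a_0=0,b_0=1$; \cref{eq:CHSH_paradox4} forces $a_1=0$, \cref{eq:CHSH_paradox3} forces $b_1=1$, and these together violate \cref{eq:CHSH_paradox5}. If instead it is realized through $P(01|10)=1$ I set $a_1=0,b_0=1$; \cref{eq:CHSH_paradox5} forces $b_1=0$, \cref{eq:CHSH_paradox2} forces $a_0=0$, and these violate \cref{eq:CHSH_paradox3}. Either branch closes with a contradiction, so no deterministic assignment survives, which by the reduction above establishes the theorem. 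The same contradiction can be read off graphically from the two pentagons of \cref{fig:CHSH_exclu}: in a deterministic colouring the saturation equations \cref{eq:CHSH_EP_sat1,eq:CHSH_EP_sat2,eq:CHSH_EP_sat3} each force one vertex of every exclusive pair to carry probability $1$, and tracing these forced values around either pentagon excludes the vertices $(01\vert00)$ and $(01\vert10)$, recovering the Hardy obstruction on each pentagon.

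The step I expect to be most delicate is the reduction to deterministic models rather than the case analysis itself: one must argue carefully that the exact zeros in \cref{eq:CHSH_paradox2,eq:CHSH_paradox3,eq:CHSH_paradox4,eq:CHSH_paradox5} propagate to every assignment in the support, so that a strictly positive \cref{eq:CHSH_paradox1} cannot be assembled from assignments that each give zero. Once that is in place the combinatorics is elementary, and the only remaining risk is mislabelling which pairs of events are exclusive when passing between the Boolean and the graphical formulations.
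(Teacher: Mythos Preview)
Your proposal is correct and follows essentially the same route as the paper's proof in Appendix~\ref{sec:CHSHC}: reduce to deterministic assignments via the convex decomposition (using that the exact zeros propagate to every term in the support), then rule out each deterministic candidate. The only difference is presentational: the paper leaves the final step to a graphical inspection of Fig.~\ref{fig:CHSH_C}, whereas you spell out the two-branch Boolean case analysis on $(a_0,a_1,b_0,b_1)$ explicitly; your version is in fact the more rigorous rendering of the same argument.
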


In other words, it is impossible to have a probability equal to one by deterministic theory to the vertices associated to the events $(01|00)$ and $(01|10)$ in a consistent way with the conditions imposed with \cref{eq:CHSH_EP_sat1,eq:CHSH_EP_sat2,eq:CHSH_EP_sat3}. See Append.~\ref{sec:CHSHC} for a detailed proof.

\emph{CHSH Paradox with a two-qubit state and local measurements.---}We know that each paradox (one for each pentagon in Fig.\ref{fig:CHSH_graph}) can be individually addressed by a two-qubit state with local measurements. However, as stated in the following theorem it is no longer the case for the combination of the two pentagons as formed in the CHSH paradox in \cref{eq:CHSH_paradox1,eq:CHSH_paradox2,eq:CHSH_paradox3,eq:CHSH_paradox4,eq:CHSH_paradox5} .

\begin{theorem}\label{thm:CHSH_QLM}
Given two parties, Alice and Bob, sharing a two-qubit entangled state $\vert \psi \rangle_{AB}$. There is no possible set of local measurements $A_0$ and $A_1$ (resp. $B_0$ and $B_1$) for Alice's dichotomic measurements on her qubit (resp. Bob) that verifies the CHSH paradox.
\end{theorem}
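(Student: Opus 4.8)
The plan is to translate the four vanishing conditions \cref{eq:CHSH_paradox2,eq:CHSH_paradox3,eq:CHSH_paradox4,eq:CHSH_paradox5} into operator identities on $|\psi\rangle_{AB}$ and to show that, for qubits, they already force the left-hand side of \cref{eq:CHSH_paradox1} to be zero. I would write each dichotomic measurement through its projectors, $A_x\leftrightarrow\{\Pi_x^0,\Pi_x^1\}$ with $\Pi_x^0+\Pi_x^1=\mathbbm{1}$ on Alice's qubit and $B_y\leftrightarrow\{Q_y^0,Q_y^1\}$ on Bob's, so that by the Born rule $P(ab|xy)=\langle\psi|\Pi_x^a\otimes Q_y^b|\psi\rangle=\|(\Pi_x^a\otimes Q_y^b)|\psi\rangle\|^2$. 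Since $\Pi_x^a\otimes Q_y^b$ is a projector, a probability vanishes iff it annihilates $|\psi\rangle$, so the four conditions read $(\Pi_0^1\otimes Q_0^1)|\psi\rangle=(\Pi_0^0\otimes Q_1^0)|\psi\rangle=(\Pi_1^1\otimes Q_0^1)|\psi\rangle=(\Pi_1^0\otimes Q_1^1)|\psi\rangle=0$, corresponding respectively to \cref{eq:CHSH_paradox2,eq:CHSH_paradox3,eq:CHSH_paradox4,eq:CHSH_paradox5}.

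The core step is to analyse the (unnormalised) post-measurement vector $|\phi\rangle:=(\mathbbm{1}\otimes Q_0^1)|\psi\rangle$, which carries all the weight of the event ``Bob measures $B_0$ and obtains outcome $1$''. The conditions from \cref{eq:CHSH_paradox2,eq:CHSH_paradox4} give $(\Pi_0^1\otimes\mathbbm{1})|\phi\rangle=(\Pi_0^1\otimes Q_0^1)|\psi\rangle=0$ and $(\Pi_1^1\otimes\mathbbm{1})|\phi\rangle=0$, hence $(\Pi_0^0\otimes\mathbbm{1})|\phi\rangle=(\Pi_1^0\otimes\mathbbm{1})|\phi\rangle=|\phi\rangle$. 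A one-line computation then shows $P(01|00)=\langle\psi|(\Pi_0^0\otimes\mathbbm{1})|\phi\rangle=\langle\psi|\phi\rangle=\||\phi\rangle\|^2$ and, identically, $P(01|10)=\||\phi\rangle\|^2$, so the entire content of \cref{eq:CHSH_paradox1} reduces to the single assertion $|\phi\rangle\neq 0$.

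Here the two-qubit restriction does the decisive work. On a single qubit the outcome-$0$ projectors are rank one, say $\Pi_0^0=|a_0\rangle\langle a_0|$ and $\Pi_1^0=|a_1\rangle\langle a_1|$, so $(\Pi_0^0\otimes\mathbbm{1})|\phi\rangle=(\Pi_1^0\otimes\mathbbm{1})|\phi\rangle=|\phi\rangle$ forces $|\phi\rangle$ to be a product vector whose Alice factor is proportional to both $|a_0\rangle$ and $|a_1\rangle$. If the two settings are genuinely distinct ($|a_0\rangle\not\parallel|a_1\rangle$), two different one-dimensional subspaces of $\mathbb{C}^2$ intersect only in the zero vector, so $|\phi\rangle=0$ and \cref{eq:CHSH_paradox1} fails. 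The degenerate case $|a_0\rangle\parallel|a_1\rangle$, i.e.\ $\Pi_0^0=\Pi_1^0=:\Pi^0$, is closed by the remaining two conditions: \cref{eq:CHSH_paradox3,eq:CHSH_paradox5} then read $(\Pi^0\otimes Q_1^0)|\psi\rangle=(\Pi^0\otimes Q_1^1)|\psi\rangle=0$, whose sum is $(\Pi^0\otimes\mathbbm{1})|\psi\rangle=0$; this again annihilates $(\Pi^0\otimes Q_0^1)|\psi\rangle$ and gives $P(01|00)+P(01|10)=0$. In both cases \cref{eq:CHSH_paradox1} is contradicted, which would establish the theorem (for any two-qubit pure state, entangled or not). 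As a consistency check, these same four vanishing conditions are precisely what reduce the full completeness relations to the saturated constraints \cref{eq:CHSH_EP_sat1,eq:CHSH_EP_sat2,eq:CHSH_EP_sat3}, so nothing beyond no-signalling local qubit measurements enters.

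I expect the main obstacle to be the careful bookkeeping of degenerate measurements, namely when some $\Pi_x^a$ has rank $0$ or $2$ or when a party's two settings coincide, since the clean ``two distinct rays'' argument must be patched in each such case (these reduce the effective number of outcomes or settings and have to be checked by hand). This is also exactly the point at which dimension two is indispensable: an entanglement measurement on $\mathbb{C}^2\otimes\mathbb{C}^2$ is built from rank-one projectors on the full four-dimensional space that are \emph{not} of product form, so the vector $|\phi\rangle$ is no longer pinned down by the marginal conditions and the obstruction dissolves, consistent with the paper's claim that such a realisation exists.
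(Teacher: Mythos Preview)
Your argument is correct and rests on the same two conditions the paper singles out, $P(11|00)=0$ and $P(11|10)=0$, together with the qubit-dimension fact that two distinct rank-one projectors on $\mathbb{C}^2$ have trivial common range. The paper's own proof fixes a basis (taking $A_0,B_0$ along the computational axes and parametrising $A_1$ by Bloch angles), reads off $d=0$ from the first condition, and then shows the second forces either $\theta_{A1}=0$ (so $A_0=A_1$) or $b=0$ (so $|\psi\rangle$ is a product state); it concludes by invoking that neither situation can support non-locality. Your coordinate-free treatment via the conditional vector $|\phi\rangle=(\mathbbm{1}\otimes Q_0^1)|\psi\rangle$ reaches the same dichotomy but closes each branch more directly, deriving $P(01|00)+P(01|10)=0$ itself rather than appealing to the general ``no entanglement/no incompatibility $\Rightarrow$ no non-locality'' step; in particular your handling of the degenerate-setting branch $A_0=A_1$ is self-contained because it brings in \cref{eq:CHSH_paradox3,eq:CHSH_paradox5}. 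The rank-$0$/rank-$2$ projector cases you flag as residual bookkeeping are likewise not addressed in the paper's proof, so your version is at least as complete.
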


A detailed proof is given in Append.~\ref{sec:CHSHQ2D2D}. This impossibility to verify the CHSH paradox comes from the necessity to address both pentagon simultaneously. For instance the conditions $P(11|00)=0$ and $P(00|01)=0$ if verified simultaneously would not permit the use of an entangled state or impose compatibility of the measurements of one of the party. Entanglement and incompatibility of the measurements a necessary to observe non-locality.

\emph{CHSH Paradox with a two-qubit and entanglement measurements.---}While an entangled state can take any form in the Hilbert space, in a separated system the operations are restricted to local operations only. In quantum mechanics, the spatial separation implies that Alice's measurements (denoted by observables $A_0$ and $A_1$) and Bob's quantum measurements (denoted by observables $B_0$ and $B_1$) act in different subspaces and hence always commute to each other: $[A_i\otimes \mathbb{I},\mathbb{I} \otimes  B_j]=0$, $\forall (i,j)\in \{1,2\}$. For dichotomic measurements ($\pm1$ outcome value), the observable can be defined by two rank-1 projectors, for instance, $A_i = \Pi_{A_i,+} - \Pi_{A_i,-}$, with $ \Pi_{A_i,+} + \Pi_{A_i,-} = \mathbb{I}$. The observable acting on the whole quantum system is given by the tensor product of the observable of Alice and Bob. As a consequence, the eigenvectors of the observable are also given by the tensor product of two vectors from each subspace.

With entanglement measurements, instead of effectively acting on the local systems separately, observables are now acting on the whole space freely. However, the same commutation relations need to be verified: $[A_i,B_j]=0$. Forming the set of four observables following the graph conditions and the proper commutations relations with entanglement measurement is actually not a trivial problem. However, we can define another scenario with the same graph of experiment by following the graph formalism in \cite{CSW:arXiv10} where each projector associated to a vertex defines an observable as $A_i$ with $\pm 1$ outcome value associated to the vertex with $i$ (see Fig.~\ref{fig:CHSH_contineq} in Append.~\ref{sec:CHSHCSW10}). We can construct a contextuality inequality.

\begin{theorem}\label{thm:CHSH_QEntM}
Given the exclusivity graph of the CHSH, $G_{\rm CHSH}(V_{\rm CHSH},E_{\rm CHSH})$ where a set of dichotomic measurements $\{A_i\}$ with $\pm 1$ outcomes values assigned to the vertices $\{i\}$ such that $\forall (i,j) \in E_{\rm CHSH}$ $A_i$ and $A_j$ are compatible. Then all Non-Contextual Hidden Variable (NCHV) theory are bounded by the following inequality
\begin{align}\label{eq:newcontineq}
\sum_{(i,j)\in E_{\rm CHSH}} \langle A_i A_j \rangle  \underset{{\rm NCHV}}{\geq} -6,
\end{align}
\end{theorem}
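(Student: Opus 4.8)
The plan is to reduce the inequality to a purely combinatorial statement about the graph $G_{\rm CHSH}$ and then exploit the parity of its odd cycles. First I would recall that the NCHV correlations form a polytope whose extreme points are the deterministic value assignments $v\colon V_{\rm CHSH}\to\{+1,-1\}$, in which every observable $A_i$ is pre-assigned a context-independent value $v_i$. Since the left-hand side of \cref{eq:newcontineq} is a linear functional of the underlying distribution over such assignments, its minimum over all NCHV models is attained at a single deterministic $v$, for which $\langle A_i A_j\rangle = v_i v_j \in\{+1,-1\}$. It therefore suffices to prove $\sum_{(i,j)\in E_{\rm CHSH}} v_i v_j \ge -6$ for every $v\in\{+1,-1\}^{V_{\rm CHSH}}$.

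Next I would translate this into a MAX-CUT problem. Setting $S=\{i: v_i=+1\}$, an edge $(i,j)$ contributes $+1$ when its endpoints lie on the same side of the partition $(S,\bar S)$ and $-1$ when they are separated, so that
\begin{equation}
\sum_{(i,j)\in E_{\rm CHSH}} v_i v_j = |E_{\rm CHSH}| - 2\,|\partial S|,
\end{equation}
where $\partial S$ is the set of cut edges. Minimizing the correlator sum is thus equivalent to maximizing the cut, and the only obstruction to cutting every edge is the presence of odd cycles, namely the two pentagons of \cref{fig:CHSH_graph}.

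The core of the argument is a parity bound for a single pentagon. For any $5$-cycle with edges $(i,i+1)$ one has $\prod_i v_i v_{i+1} = \prod_i v_i^2 = 1$, so the number of edges with $v_i v_{i+1} = -1$ is even and hence at most four; consequently $\sum_{\rm pentagon} v_i v_j \ge 5 - 2\cdot 4 = -3$. This is exactly the correlator form of the pentagon bound already used for the KCBS scenario. Since the two pentagons are edge-disjoint and together exhaust $E_{\rm CHSH}$, I would split the sum over the two pentagons and add the two copies of this bound to obtain $\sum_{(i,j)\in E_{\rm CHSH}} v_i v_j \ge -3-3 = -6$, which is the claimed inequality.

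I expect the main obstacle to be this final combining step rather than the pentagon bound itself. Two points need care: confirming from \cref{fig:CHSH_graph} that every edge of $E_{\rm CHSH}$ belongs to exactly one of the two pentagons, so that the two bounds add without double counting any shared edge (if instead the pentagons shared an edge, the additive argument would overcount and the true bound would drop below $-6$); and verifying tightness by exhibiting an explicit $v$ that simultaneously saturates both pentagons at $-3$. The latter amounts to finding an assignment that is alternating on each pentagon except for a single frustrated edge, where the two near-alternating patterns must agree on the vertices the pentagons have in common. I would construct this assignment directly and check it against \cref{eq:newcontineq}.
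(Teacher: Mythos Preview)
Your reduction to deterministic assignments is the right first move, but the argument then goes wrong at two linked places.

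First, the decomposition. The exclusivity graph $G_{\rm CHSH}$ is $3$-regular on $8$ vertices, hence has $12$ edges (this is exactly the ``$12$'' appearing in the paper's identity $\sum_{(i,j)\in E}\langle A_iA_j\rangle = 12 - 6\sum_i p(1\vert i)$). The two pentagons in \cref{fig:CHSH_graph} cover all $8$ \emph{vertices}, but two $5$-cycles carry at most $10$ edges, so they cannot exhaust $E_{\rm CHSH}$. Your additive step ``$-3-3=-6$'' therefore bounds only a $10$-edge subsum; the two leftover edges can each contribute $-1$, and the argument as written yields only $\geq -8$.

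Second, and this is the real source of the gap, you dropped the exclusivity constraint built into the scenario. In the paper's setting the edges of $G_{\rm CHSH}$ encode not just compatibility but exclusivity, i.e.\ $p(1,1\vert i,j)=0$ for every edge. For a deterministic assignment this forces the set $\{i:v_i=-1\}$ to be independent in $G_{\rm CHSH}$. Without that restriction your claim ``it suffices to prove $\sum v_iv_j\geq -6$ for every $v\in\{\pm1\}^{V}$'' is false: taking any cut with more than $9$ crossing edges (for instance an assignment that is alternating on the outer $8$-cycle) gives $\sum v_iv_j\leq -8$. So the pentagon/MAX-CUT route cannot succeed unless you first impose exclusivity, and once you do, the natural object is not the cut but the independent set of ``$-1$'' vertices.

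That is precisely how the paper proceeds: exclusivity yields $\langle A_iA_j\rangle = 1-2[p(1\vert i)+p(1\vert j)]$, summing over the $12$ edges and using $3$-regularity gives $12-6\sum_i p(1\vert i)$, and then the NCHV bound $\sum_i p(1\vert i)\leq \alpha(G_{\rm CHSH})=3$ delivers $\geq -6$. If you want to salvage your approach, the fix is to replace the unconstrained MAX-CUT by ``cut edges $=3|S|$ with $S$ independent, hence $|S|\leq 3$'', which is exactly the independence-number argument in different clothing.
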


In the Append.~\ref{sec:CHSHCSW10} we provide the proof where we show that this inequality has the same classical and quantum bound compared to the CHSH.

In this case, the \cref{eq:CHSH_EP_sat1,eq:CHSH_EP_sat2,eq:CHSH_EP_sat3} means that the state $\vert \psi \rangle$ has to be generated by the pairs of vectors ($\vert v_2 \rangle$, $\vert v_3 \rangle$), ($\vert v_4 \rangle$, $\vert v_5 \rangle$) and ($\vert v_6 \rangle$, $\vert v_7 \rangle$) at the same time. In \cref{eq:CHSH_entmeas1,eq:CHSH_entmeas2,eq:CHSH_entmeas3,eq:CHSH_entmeas4,eq:CHSH_entmeas5,eq:CHSH_entmeas6,eq:CHSH_entmeas7,eq:CHSH_entmeas8,eq:CHSH_entmeas9} we present an example of a set of state and measurements that verify the orthogonality relations from the graphs and the conditions from \cref{eq:CHSH_paradox1,eq:CHSH_paradox2,eq:CHSH_paradox3,eq:CHSH_paradox4,eq:CHSH_paradox5}.
\begin{align}%\label{eq:CHSH_entmeas}
& \vert \psi \rangle = (1,1,0,0)^T/\sqrt{2},\label{eq:CHSH_entmeas1}\\
& \vert v_1 \rangle =  (0,-1,-2,1)^T/\sqrt{6},\label{eq:CHSH_entmeas2}\\
& \vert v_2 \rangle =  (1,0,0,0)^T,\label{eq:CHSH_entmeas3}\\
& \vert v_3 \rangle =  (0,1,0,0)^T,\label{eq:CHSH_entmeas4}\\
& \vert v_4 \rangle =  (0,2,1,1)^T/\sqrt{6},\label{eq:CHSH_entmeas5}\\
& \vert v_5 \rangle =  (3,1,-1,-1)^T/2\sqrt{3},\label{eq:CHSH_entmeas6}\\
& \vert v_6 \rangle =  (2,0,1,-1)^T/\sqrt{6},\label{eq:CHSH_entmeas7}\\
& \vert v_7 \rangle =  (1,3,-1,1)^T/2\sqrt{3},\label{eq:CHSH_entmeas8}\\
& \vert v_8 \rangle =  (1,0,2,1)^T/\sqrt{6},\label{eq:CHSH_entmeas9}
\end{align}
where $T$ is the transpose operator. This leads to $P(01|00)+P(01|10)= 1/6$. Moreover, we have
\begin{align}\label{eq:CHSH_ququart_gen}
& \vert \psi \rangle = (\vert v_2 \rangle + \vert v_3  \rangle)/\sqrt{2},\\
& \vert \psi \rangle = (\vert v_4 \rangle + \sqrt{2}\vert v_5 \rangle)/\sqrt{3},\\
& \vert \psi \rangle = (\vert v_6 \rangle + \sqrt{2}\vert v_7 \rangle)/\sqrt{3},
\end{align}
which corresponds to the verification of the \cref{eq:CHSH_EP_sat1,eq:CHSH_EP_sat2,eq:CHSH_EP_sat3}. We conclude that a two-qubit state with entanglement measurements can exhibit quantum correlations with the CHSH paradox. The inequality in \cref{eq:newcontineq} is then violated. The observables can be obtained from the projectors (see App.~\ref{sec:CHSHCSW10}).

We use $p_{\rm Hardy}$ to refer to $P(01|00)+P(01|10)$ for the 2-2-2 scenario and $P(1|1)+P(1|8)$ in the contextual scenario. We have summarized the different values of $p_{\rm Hardy}$ from the different theories in Tab.\ref{tab:hierarchy}. Interestingly, a two-qubit system with local measurements does not perform better than classical theory, while a two-qubit system with entanglement measurements can verify the paradox.
\begin{table}[ht]
\begin{tabular}{l|c|c|c|}
     & Classical Th. &  Local M. & Entangled M. \\
    \hline
		\hline
    $p_{\rm Hardy}$ & 0 & 0 & 1/6
\end{tabular}
		\caption{Summary of the hierarchy of the verification of the paradox for different resources. "Classical Th." is a classical resource,  "Local M." is a two-qubits state with local measurements and "Entangled M." with entanglement measurements.}
		\label{tab:hierarchy}
\end{table}
Using the same graph of experiment with different events assigned from with different scenario has already been investigated for the pentagon. For the KCBS inequality, there is also a hierarchy \cite{KCBS:prl08,SBBC:pra13} between a two-qubit state with local measurements and a single qutrit system, \emph{i.e.}, a qutrit can get a violation higher than what is possible to get for a two-qubit state. However, the differences here is that a two-qubit state does not outperform classical theories. Because paradox proofs are difficult to verify experimentally, an interesting open problem is to find an inequality with similar features.
\begin{table}[ht]
\begin{tabular}{l|c|c|c|}
     & $p_{\rm Hardy}$ &  dim & \#M \\
    \hline
		\hline
    CHSH Paradox (our) & 1/6  $\approx$ 0.167 & 4 & 8  \\
    \hline
    Cabello \textit{et al.} \cite{CBCB:prl13} & 1/9  $\approx$ 0.11 & 3 & 5 \\
    \hline
    Hardy Paradox \cite{Hardy:prl93} & $\frac{5\sqrt{5}-11}{2}$  $\approx$ 0.09 & 4 & 4\\
\end{tabular}
		\caption{Comparison of the different paradox proof of quantum correlations with the CHSH paradox and for other system of similar dimension ("dim" is the dimension of the required Hilbert space and \#M is the number of measurements) from the literature.}
		\label{tab:phardy}
\end{table}

In the Tab.~\ref{tab:phardy} we compare different maximum values of $p_{\rm Hardy}$ paradox for system of similar dimension. Our formulation of a paradox obtain the highest value of $p_{\rm Hardy}$ among the different the previous logical proofs using system of similar dimension. However, compared to the other approaches, the fundamental difference here is that our $p_{\rm Hardy}$ is the sum of two probabilities.

\emph{Conclusions.---}We have described a method based on the saturation of the exclusivity principle to build a graph for a logical proof of quantum correlations. Our method suggests a deep connection between the exclusivity principle and the logical proofs of quantum correlations. With this method we have reviewed the Hardy paradox and have showed direct connection with the KCBS scenario by showing that the graph representation of the Hardy paradox is a pentagon. We demonstrate the composition of two paradoxes and archive the highest $p_{\rm Hardy}$ for system of similar dimension. The verification of the paradox implies the violation if the CHSH inequality in the probability form, but the inverse is not true. Interestingly, we found that this paradox cannot be verified by a two-qubit state with local measurements while it is possible with entanglement measurements. Finally, we also provided a contextuality inequality which is violated when the paradox is verified.
There exist a multipartite extension of the Hardy paradox in \cite{WM:prl12}, it would be interesting to know whether the superposition of paradoxes occur also in the multipartite settings.

\noindent {\bf Acknowledgements.}
The authors would like to thank Shane Mansfield, Damian Markham and Wonmin Son for discussions.

\bibliographystyle{unsrt}

%\balance
\bibliography{biblio}

\clearpage
%\onecolumngrid
\appendix

\section{Graph-Theoretic Approach to Quantum Correlations}\label{sec:CSW14}

\subsection{Graph Formalism}

We propose to review the graph formalism for the description of quantum correlations developed in \cite{CSW:prl14}. In a scenario, we can list each possible event $e$ for a given set of compatible measurements. For instance if two dichotomic measurements denoted by $x$ and $y$ are compatible we have the following list of possible events: $(0,0\vert x,y),(0,1\vert x,y),(1,0\vert x,y),(1,1\vert x,y)$. From that, we can define a graph $\mathcal{G}(V,E)$ for which we associate to each vertex an event $e$ denoted by $(a,b\vert x,y)$, where $a$ and $b$ are the measurement outcomes for tests $x$ and $y$. Pairs of exclusive events are represented by adjacent vertices. Two events are exclusive if they involved at least one measurement with two non consistent outcomes. For instance, if we consider the two following events: $\{0,0\vert x,y\}$ and $\{1,0\vert x,z\}$, the outcomes of the measurement $x$ are different in the two events, obviously, this cannot occur simultaneously. Hence, this two events are called exclusive. We will refer to $\mathcal{G}$ as the \textit{exclusivity graph of the experiment} \cite{CSW:prl14}.

We are also interested in the measurement outcomes statistics. In particular, contextuality and non-locality inequalities are based on linear combination of probabilities of a subset of events of the corresponding experiments. A specific combination, $S$ will be of the form: $S=\sum_i w_i P(e_i)$, where $w_i > 0$ and $P(e_i)$ is the probability of the event $e_i$ to happen. We define a graph $G(V,E,w)\subseteq \mathcal{G}$, where each vertex $i\in V$ represents an event $e_i$ of $S$. We call $G$ the exclusivity graph of $S$ \cite{CSW:prl14}.

\subsection{Contextuality Inequality}
The exclusivity graphs can be used to calculate limits of the correlations in classical and quantum theories. It has been showed that \cite{CSW:prl14}:
\begin{equation}
S \leq \alpha_w(G)\leq \vartheta_w(G),
\end{equation}
where $\alpha(G)$ is the independent number of the graph $G$ and $\vartheta(G)$ is the Lovasz number of the graph $G$. The equality of $\alpha_w(G)$ and $\vartheta_w(G)$ holds if $G$ is a perfect graph, i.e., the graph $G$ has neither odd-holes of length at least five nor odd anti-holes \cite{CRST:am06}.

\subsubsection{Classical Theory}
The maximum value of $S$ with a non-contextual classical resource is always upper-bounded by a non-contextual deterministic behavior, i.e. a probability distribution of the outcomes where each event $e_i$ in $S$ has a probability $P(e_i)$ equals to $0$ or $1$. Exclusive events cannot have probability $1$ simultaneously. Hence, the maximum value for $S$ is given by the maximum number of events with a probability equal to $1$. This correspond to the maximum number or $1$ we can assign to non-adjacent vertices in $G$, which correspond to the independence number $\alpha(G)$ of the graph.

\subsubsection{Quantum Theory}
In a quantum theory, the set of measurements that leads to the events $\{e_i\}_{i\in V}$ can be represented by a set of projectors $\{\Pi_i\}_{i\in V}$. The maximum value of $S$ in this case is $max \sum_{i\in V} Tr[\Pi_i \rho]$, where $\rho$ is a quantum state and simplified to $max \sum_{i \in V} \vert \langle v_i \vert \psi \rangle \vert^2$, where $\vert v_i \rangle = \Pi_i \vert \psi \rangle / \sqrt{\langle \psi \vert \Pi_i \vert \psi \rangle}$ and $\vert \psi \rangle$ is a pure quantum state. This quantity is known to be equal to the Lovasz number $\vartheta(G)$, of the graph $G(V,E,w)$ where the set of vectors $\{\vert v_i \rangle\}_{i \in V}$ is the orthonormal representation (OR) of the complement $\bar{G}$ and the quantum state $\vert \psi \rangle$ is the handle \cite{Lovasz:79}.

\section{Saturation of the Exclusivity Principle in the 2-2-2 Scenario}\label{sec:SEP}
Let be $E= \{e_i\}$, a set of mutually exclusive events. The exclusivity principle goes as follows:
\begin{equation}
 \sum_{i} P(e_i) \leq 1,
\end{equation}
the sum of the probability of mutually exclusive events is bounded by 1. In particular, the normalization condition is the saturation of this principle when all events arise from the same measurement choice.

In the 2-2-2 scenario (2 parties, 2 measurements each with 2 outcomes), four mutually exclusive events saturate the exclusivity principle.
\begin{equation}
 \sum_{a,b\in\{0,1\}} P(a,b\vert x,y) = 1, \forall x,y \in \{0,1\}.
\end{equation}

In quantum mechanics, events are associated with projectors and related with the Born rule: $P(e) = Tr(\rho \Pi_e)$, where $\rho$ is a quantum state.

\begin{lemma}\label{lem:LO}
Given two events $e_1$ and $e_2$ and their associated projectors $\Pi_1$ and $\Pi_2$. If the events are exclusive, then the two associated projectors are orthogonal.
\end{lemma}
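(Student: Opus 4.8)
The plan is to unpack the quantum representation of measurement outcomes and reduce the claim to the elementary orthogonality relation that holds within a single projective measurement. First I would recall that each measurement $x$ with outcomes $a$ is represented by a complete family of orthogonal projectors $\{\Pi_{a|x}\}_a$ satisfying $\Pi_{a|x}\Pi_{a'|x} = \delta_{a a'}\Pi_{a|x}$ and $\sum_a \Pi_{a|x} = \mathbb{I}$. Since the measurements appearing in a single event are by assumption compatible, their projectors pairwise commute, and the projector associated with a joint event $(a,b\vert x,y)$ is the product $\Pi_{(a,b\vert x,y)} = \Pi_{a|x}\,\Pi_{b|y} = \Pi_{b|y}\,\Pi_{a|x}$, which is itself a projector.

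Next I would invoke the definition of exclusivity directly. If $e_1$ and $e_2$ are exclusive, then by definition there is at least one measurement, call it $m$, that appears in both events but with inconsistent recorded outcomes, say outcome $a$ in $e_1$ and $a' \neq a$ in $e_2$. I would then factor each event projector as $\Pi_1 = \Pi_{a|m}\,Q_1$ and $\Pi_2 = \Pi_{a'|m}\,Q_2$, where $Q_1$ and $Q_2$ collect the remaining single-outcome projectors of each event. Because all the measurements within each event are mutually compatible, $Q_1$ and $Q_2$ commute with both $\Pi_{a|m}$ and $\Pi_{a'|m}$.

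The conclusion then follows from a short computation: commuting $\Pi_{a'|m}$ past $Q_1$ gives $\Pi_1\Pi_2 = \Pi_{a|m}\Pi_{a'|m}\,Q_1 Q_2$, and since $a \neq a'$ the leading factor $\Pi_{a|m}\Pi_{a'|m}$ vanishes, so that $\Pi_1\Pi_2 = 0$, which is exactly the statement that the two projectors are orthogonal. I expect the only real subtlety to lie in justifying the product form of the joint projector together with the commutativity of the $Q_i$ with the conflicting factors, that is, in being explicit that compatibility of the measurements within a single context is precisely what licenses writing the event projector as a product of commuting single-outcome projectors. Once that structural point is secured, the orthogonality is immediate from $\Pi_{a|m}\Pi_{a'|m}=0$, and the same argument applies verbatim whether the shared conflicting measurement is Alice's or Bob's.
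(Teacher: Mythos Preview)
Your argument is correct: once the event projector is written as a product of commuting single-outcome projectors, the shared measurement with inconsistent outcomes forces $\Pi_{a|m}\Pi_{a'|m}=0$ and hence $\Pi_1\Pi_2=0$. The only point to be careful about, which you already flag, is that $Q_1$ commutes not just with $\Pi_{a|m}$ but with every spectral projector of $m$, in particular $\Pi_{a'|m}$; this follows because compatibility of $m$ with the other measurements in the context means the full observables commute, hence all their spectral projectors commute.

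As for comparison with the paper: the paper does not actually supply a proof of this lemma. Immediately after stating it, the authors simply remark that the property is known in the literature as Local Orthogonality and cite \cite{SFABCLA:pra14}. Your write-up therefore goes well beyond what the paper offers, giving the explicit factorisation and commutation argument rather than deferring to an external reference.
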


This property can be rephrased and referred as a feature called Local Orthogonality (LO) in the literature where exclusive events are referred as orthogonal events \cite{SFABCLA:pra14}.

\begin{theorem}
Given a set of mutually exclusive events $E= \{e_i\}$. If their associated projectors verify the completeness relation, then the sum of the probabilities of the events in $E$ saturates the exclusivity principle.
\end{theorem}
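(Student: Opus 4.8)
The plan is to reduce the statement to a one-line computation with the Born rule, in which the completeness relation supplies the saturation. First I would invoke Lemma~\ref{lem:LO}: because the events $E=\{e_i\}$ are mutually exclusive, their associated projectors $\{\Pi_i\}$ are mutually orthogonal, $\Pi_i\Pi_j=0$ for $i\neq j$. Combined with the completeness hypothesis $\sum_i \Pi_i = \mathbb{I}$, this means the $\{\Pi_i\}$ constitute a genuine resolution of the identity, i.e.\ a projective measurement whose outcomes are precisely the events of $E$.

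Next I would express each probability through the Born rule, $P(e_i)=\mathrm{Tr}(\rho\,\Pi_i)$, for an arbitrary state $\rho$, sum over $i$, and use linearity of the trace to pull the sum inside:
\begin{equation}
\sum_i P(e_i) = \sum_i \mathrm{Tr}(\rho\,\Pi_i) = \mathrm{Tr}\!\left(\rho \sum_i \Pi_i\right) = \mathrm{Tr}(\rho\,\mathbb{I}) = \mathrm{Tr}(\rho) = 1,
\end{equation}
where the penultimate equality is exactly the completeness relation collapsing the projector sum to the identity, and the final equality is the normalization $\mathrm{Tr}(\rho)=1$ of the density operator. This is the saturation of the exclusivity principle $\sum_i P(e_i)\leq 1$.

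I expect no genuine computational obstacle, since the entire content is linearity of the trace together with $\mathrm{Tr}(\rho)=1$. The only point requiring care is conceptual: one must check that mutual exclusivity is even consistent with completeness. This is precisely the role of Lemma~\ref{lem:LO} --- orthogonality of the $\Pi_i$ is what guarantees that $\sum_i\Pi_i=\mathbb{I}$ can be satisfied by honest projectors forming a von Neumann measurement, rather than by an overcomplete positive-operator resolution. Strictly, orthogonality is not needed for the trace identity itself, but it is what makes the hypothesis physically meaningful and the events genuinely pairwise exclusive.
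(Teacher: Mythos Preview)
Your proposal is correct and follows essentially the same route as the paper: apply the Born rule, use linearity of the trace to pull the sum onto the projectors, invoke completeness to collapse to $\mathbb{I}$, and finish with $\mathrm{Tr}(\rho)=1$. Your additional commentary on Lemma~\ref{lem:LO} is a nice conceptual remark but, as you yourself note, is not needed for the computation --- and indeed the paper's proof does not invoke it.
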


\begin{proof}
Let a set of mutually exclusive $E= \{e_i\}$ and their associated projectors $\{\Pi_{e_i}\}$ such that $\sum_{e_i \in E} \Pi_{e_i} = \mathbb{I}$. Then for any quantum state $\rho$:
\begin{equation}
 \sum_{i} P(e_i) = Tr(\rho \sum_{e_i \in E} \Pi_{e_i} ) = Tr(\rho) = 1.
\end{equation}
\end{proof}
While for our purpose we only need to consider the 2-2-2 scenario, the above statements hold in any scenario.

Beyond quantum mechanics, this can also be understood with a combinatorial approach. For the 2-2-2 scenario, we have 16 possible events and the events we consider are only of the form $e = (a,b\vert x,y)$, \emph{i.e.}, it tells only about two measurement outcomes on the four possible measurements. Considering the possible other outcomes (even if we don't have access to all outcomes simultaneously because of incompatible measurements) a given event can address 4 possible cases out of the 16 total events. Two exclusive events do not overlap (if they do, they are not exclusive by definition). Hence, 4 mutually exclusive events cover all the 16 cases and the sum of their probabilities saturate the exclusivity principle.
We use these properties to visualize the different saturation of the exclusivity principle we use to build paradoxes.

\subsection{Saturation of the Exclusivity Principle in the Hardy Paradox}\label{sec:SEPHP}

The table is organized into four blocks. The first block has the list of all events, the second block and third blocks have the probabilities for the two exclusivity relations in \cref{eq:HP_EP}. The last block corresponds to the probability which is equal to zero under classical theory. The "N" (no) denotes an event that is forbidden (imposed by the conditions of the paradox), the "Y" (yes) denotes that an event is still available and correspond to one of the event in a given exclusivity principle equation. One can see that for both second and third block, all the events are covered, hence the exclusivity principle is saturated.

\begin{table}[ht]
\begin{tabular}{|c||c|c|c|c|c|c|c|c|c|c|c|c|c|c|c|c|}
	\hline
	$A_0$ & \multicolumn{8}{c|}{0} & \multicolumn{8}{c|}{1} \\
	\hline
	$A_1$ & \multicolumn{4}{c|}{0} & \multicolumn{4}{c|}{1} & \multicolumn{4}{c|}{0} &  \multicolumn{4}{c|}{1}  \\
	\hline
	$B_0$ & \multicolumn{2}{c|}{0} & \multicolumn{2}{c|}{1} & \multicolumn{2}{c|}{0} & \multicolumn{2}{c|}{1} & \multicolumn{2}{c|}{0} & \multicolumn{2}{c|}{1} & \multicolumn{2}{c|}{0} & \multicolumn{2}{c|}{1} \\
	\hline
	$B_1$ & 0 & 1 & 0 & 1 & 0 & 1 & 0 & 1 & 0 & 1 & 0 & 1 & 0 & 1 & 0 & 1 \\
	\hline
	\hline
	$P(0,0|A_0,B_1)$ & N &  & N &  & N &  & N & \multicolumn{9}{c|}{} \\
	\hline
	$P(1,1|A_1,B_1)$ & \multicolumn{5}{c|}{} & N & & N & \multicolumn{5}{c|}{}  & N & & N\\
	\hline
	$P(1,0|A_0,B_1)$ & \multicolumn{8}{c|}{} & Y & & Y & & Y & & Y & \\
	\hline
	$P(0,1|A_1,B_1)$ & & Y & & Y & \multicolumn{5}{c|}{} & Y & & Y & \multicolumn{4}{c|}{} \\
	\hline
	\hline
	$P(0,0|A_1,B_0)$ & \multicolumn{2}{c|}{N} & \multicolumn{6}{c|}{} & \multicolumn{2}{c|}{N} & \multicolumn{6}{c|}{} \\
	\hline
	$P(1,1|A_1,B_1)$ & \multicolumn{5}{c|}{} & N & & N & \multicolumn{5}{c|}{}  & N & & N\\
	\hline
	$P(0,1|A_1,B_0)$ &  \multicolumn{2}{c|}{} & \multicolumn{2}{c|}{Y} & \multicolumn{6}{c|}{} & \multicolumn{2}{c|}{Y} & \multicolumn{4}{c|}{} \\
	\hline
	$P(1,0|A_1,B_1)$ &  \multicolumn{4}{c|}{} & Y & & Y & \multicolumn{5}{c|}{} & Y & & Y &\\
	\hline
      \hline
      $P(0,0|A_0,B_0)$ & \multicolumn{2}{c|}{?} & \multicolumn{2}{c|}{} & \multicolumn{2}{c|}{?} & \multicolumn{10}{c|}{} \\
      \hline
\end{tabular}
	\caption{Table representing the list of all combinations of events and the saturation of the exclusivity principle for the Hardy paradox.}
	\label{tab:SEPHP}
\end{table}

\subsection{Saturation of the Exclusivity Principle in the CHSH Paradox}\label{sec:SEPCHSH}

The table is organized into four blocks. The first block has the list of all events, the second, the third and the fourth blocks have the probabilities for the three exclusivity relations in Fig.~\ref{fig:CHSH_exclu}. The last block corresponds to the probabilities which is equal to zero under classical theory. The "N" (no) denotes an event that is forbidden (imposed by the conditions of the paradox), the "Y" denotes that an event is still available and correspond to one of the event in a given exclusivity principle equation. One can see that for both second and third block, all the events are covered, hence the exclusivity principle is saturated.

\begin{table}[ht]
\begin{tabular}{|c||c|c|c|c|c|c|c|c|c|c|c|c|c|c|c|c|}
      \hline
      $A_0$ & \multicolumn{8}{c|}{0} & \multicolumn{8}{c|}{1} \\
      \hline
      $A_1$ & \multicolumn{4}{c|}{0} & \multicolumn{4}{c|}{1} & \multicolumn{4}{c|}{0} &  \multicolumn{4}{c|}{1}  \\
      \hline
      $B_0$ & \multicolumn{2}{c|}{0} & \multicolumn{2}{c|}{1} & \multicolumn{2}{c|}{0} & \multicolumn{2}{c|}{1} & \multicolumn{2}{c|}{0} & \multicolumn{2}{c|}{1} & \multicolumn{2}{c|}{0} & \multicolumn{2}{c|}{1} \\
      \hline
      $B_1$ & 0 & 1 & 0 & 1 & 0 & 1 & 0 & 1 & 0 & 1 & 0 & 1 & 0 & 1 & 0 & 1 \\
      \hline
      \hline
      $P(1,1|A_0,B_0)$ & \multicolumn{10}{c|}{} & \multicolumn{2}{c|}{N} & \multicolumn{2}{c|}{} & \multicolumn{2}{c|}{N}\\
      \hline
      $P(0,0|A_0,B_1)$ & N & & N & & N & & N & \multicolumn{9}{c|}{}\\
      \hline
      $P(1,0|A_0,B_0)$ & \multicolumn{8}{c|}{} & \multicolumn{2}{c|}{Y} & \multicolumn{2}{c|}{} & \multicolumn{2}{c|}{Y} & \multicolumn{2}{c|}{}\\
      \hline
      $P(0,1|A_0,B_1)$ & & Y &  & Y &  & Y &  & Y &\multicolumn{8}{c|}{} \\
      \hline
      \hline
      $P(0,0|A_0,B_1)$ & N & & N & & N & & N & \multicolumn{9}{c|}{}\\
      \hline
      $P(0,1|A_1,B_1)$ & & N & & N & \multicolumn{5}{c|}{} & N & & N & \multicolumn{4}{c|}{}\\
      \hline
      $P(1,0|A_0,B_1)$ & \multicolumn{8}{c|}{} & Y & & Y & & Y & & Y & \\
      \hline
      $P(1,1|A_1,B_1)$ & \multicolumn{5}{c|}{} & Y & & Y & \multicolumn{5}{c|}{}  & Y & & Y\\
      \hline
      \hline
      $P(1,1|A_1,B_0)$ & \multicolumn{6}{c|}{} & \multicolumn{2}{c|}{N} & \multicolumn{6}{c|}{} & \multicolumn{2}{c|}{N} \\
      \hline
      $P(0,1|A_1,B_1)$ & & N & & N & \multicolumn{5}{c|}{} & N & & N & \multicolumn{4}{c|}{}\\
      \hline
      $P(0,0|A_1,B_1)$ & Y & & Y & \multicolumn{5}{c|}{} & Y & & Y & \multicolumn{5}{c|}{}\\
      \hline
      $P(1,0|A_1,B_0)$ & \multicolumn{4}{c|}{} & \multicolumn{2}{c|}{Y} & \multicolumn{6}{c|}{} & \multicolumn{2}{c|}{Y} & \multicolumn{2}{c|}{} \\
      \hline
      \hline
      $P(0,1|A_0,B_0)$ & \multicolumn{6}{c|}{} & \multicolumn{2}{c|}{?} & \multicolumn{6}{c|}{} & \multicolumn{2}{c|}{?} \\
      \hline
      $P(0,1|A_1,B_0)$ &  \multicolumn{2}{c|}{} & \multicolumn{2}{c|}{?} & \multicolumn{6}{c|}{} & \multicolumn{2}{c|}{?} & \multicolumn{4}{c|}{} \\
      \hline
\end{tabular}
		\caption{Table representing the list of all combinations of events and the saturation of the exclusivity principle for the CHSH paradox.}
		\label{tab:SEPCHSH}
\end{table}

\section{CHSH Paradox with Classical Resources}\label{sec:CHSHC}

A deterministic theory is a classical theory where the probability of an event is either $0$ or $1$. A convex combination of from all deterministic probabilities can generate any classical behavior. Moreover in a geometrical approach classical theory forms a convex polytope where deterministic probability distribution are the vertices, making these probability distribution sufficient to consider when we wish to test whether if a given probability distribution is inside or outside this convex polytope \cite{BCPSW:rmp14}. In the graph framework, this corresponds to assign probabilities equal to zero or one to the vertices. However, because connected vertices represent exclusive events their assigned probabilities cannot be $1$ to both. The only possible combinations are $0-0$, $1-0$ and $0-1$ for a pair of connected vertices.

\begin{theorem}
A non-contextual classical theory cannot verify the CHSH Paradox.
\end{theorem}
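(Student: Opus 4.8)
The plan is to reduce the claim to the extreme points of the classical polytope and then carry out a short finite case check. As recalled at the start of this appendix, any non-contextual classical behaviour is a convex mixture of deterministic assignments, so it suffices to rule out every \emph{deterministic} non-contextual model. For such a model I would use the fact that non-contextuality (equivalently, for this bipartite scenario, no-signalling together with determinism) forces each measurement to carry a context-independent outcome: there exist fixed bits $a_0,a_1,b_0,b_1 \in \{0,1\}$, where $a_x$ is Alice's outcome for setting $x$ and $b_y$ is Bob's outcome for setting $y$. Under this description the probability of an event $(a,b\vert x,y)$ is simply the indicator that $a_x=a$ and $b_y=b$, and each of the saturation relations \cref{eq:CHSH_EP_sat1,eq:CHSH_EP_sat2,eq:CHSH_EP_sat3} holds automatically as a normalisation identity (this is exactly the block structure exhibited in Tab.~\ref{tab:SEPCHSH}).

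The core of the argument is then a deterministic propagation of the four vanishing conditions. First, \cref{eq:CHSH_paradox1} in a deterministic model means that at least one of the events $(01\vert 00)$ and $(01\vert 10)$ has probability $1$; since both require Bob's outcome $b_0=1$, I immediately fix $b_0=1$. With $b_0=1$, the condition $P(11\vert 00)=0$ of \cref{eq:CHSH_paradox2} forbids $a_0=1$, so $a_0=0$, and likewise $P(11\vert 10)=0$ of \cref{eq:CHSH_paradox4} forbids $a_1=1$, so $a_1=0$. Now $a_0=0$ together with $P(00\vert 01)=0$ of \cref{eq:CHSH_paradox3} forces $b_1=1$, whereas $a_1=0$ together with $P(01\vert 11)=0$ of \cref{eq:CHSH_paradox5} forces $b_1=0$. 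These two requirements on $b_1$ are incompatible, so no deterministic assignment exists and, by convexity, no non-contextual classical behaviour verifies the paradox.

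Two cross-checks are worth including. Phrased purely in the event/graph language, the three saturation equations each force exactly one of their two ``Y'' events to the value $1$ (two exclusive events cannot both be $1$), and tracing these forced assignments around the two overlapping pentagons of Fig.~\ref{fig:CHSH_graph} reproduces the same obstruction to assigning $1$ to either target event. Even more directly, the saturation relations give $S_{\rm CHSH}=3+P(01\vert 00)+P(01\vert 10)$, which by \cref{eq:CHSH_paradox1} exceeds the non-contextual bound $\alpha(G_{\rm CHSH})=3$ of \cref{eq:schsh} established in Append.~\ref{sec:CSW14}; this alone contradicts any NCHV model. The computation is elementary, so the only real point requiring care is the first step: justifying that a deterministic non-contextual model admits the factorised description in terms of $a_0,a_1,b_0,b_1$, and verifying that the case split triggered by \cref{eq:CHSH_paradox1} is genuinely exhaustive. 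Both reduce to the observation that every branch is funnelled through $b_0=1$, after which the four vanishing conditions over-determine $b_1$.
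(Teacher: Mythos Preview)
Your proof is correct and shares the paper's overall strategy: reduce to deterministic assignments by convexity, then show that no deterministic model can satisfy the five conditions. Where the paper carries out the deterministic obstruction pictorially on the exclusivity graph (the red--rectangle/blue--square hypergraph argument of Fig.~\ref{fig:CHSH_C}), you instead parametrise the model by the four bits $a_0,a_1,b_0,b_1$ and propagate the constraints algebraically to a contradiction on $b_1$. This buys you a fully self-contained argument that does not rely on reading the figure. Your second cross-check, observing that the saturation relations force $S_{\rm CHSH}=3+P(01\vert 00)+P(01\vert 10)>\alpha(G_{\rm CHSH})$, is a genuinely shorter alternative route that the paper states in the main text but does not invoke in the proof of the theorem itself; it dispatches the NCHV case in one line once \cref{eq:schsh} is granted.
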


\begin{proof}
From the previous work in \cite{CBCB:prl13,SZDM:pra16} and our investigation in this letter we know that the pentagon cannot be verified by any classical theory.

Any probability distribution of measurement outcomes defined by any classical theory can be written by as a convex combination of deterministic probability distribution \cite{BCPSW:rmp14}. Deterministic probability distributions are

\begin{equation}
  p_{det}(a,b\vert x,y) =
  \left\{
      \begin{aligned}
        &1 \text{ for $a=a_x$ and $b=b_y$},\notag \\
        &0 \text{ otherwise}\\
      \end{aligned}
    \right.
\end{equation}

where $a_x$ and $b_y$ $\forall x,y$ define a deterministic probability distribution. Hence, a classical probability distribution is

\begin{equation}
  p_{classical}(a,b\vert x,y) = \sum_{\lambda} q_\lambda p_{det}(a,b\vert x,y,\lambda)
\end{equation}

with $q_\lambda \geq 0$ and $\sum_{\lambda} q_\lambda = 1$. For a classical probability distribution to verify the paradox it is needed that all deterministic probability distribution verify

\begin{align}%\label{eq:CHSH_paradox_strict}
& P(11|00)=0,\label{eq:CHSH_paradox_strict1}\\
& P(00|01)=0,\label{eq:CHSH_paradox_strict2}\\
& P(11|10)=0,\label{eq:CHSH_paradox_strict3}\\
& P(01|11)=0.\label{eq:CHSH_paradox_strict4}
\end{align}

Then, either there is at least one deterministic probability distribution such that $P(01|00)+P(01|10)>0$ or $P(01|00)>0$ or $P(01|10)>0$. None of these cases is actually possible. This can be verified directly on the graph. In Fig.~\ref{fig:CHSH_C} is showed the conditions of the CHSH paradox. The red rectangle defines a set where a deterministic probability distribution has to assign an event with probability equal to one inside, this correspond to the conditions in \cref{eq:CHSH_paradox_strict1,eq:CHSH_paradox_strict2,eq:CHSH_paradox_strict3,eq:CHSH_paradox_strict4}. While doing this, to verify the CHSH paradox at least one of the hyper-graph defined by the blue squares needs to have a probability equal to one assigned. From the graph this can be easily seen by trying to assign a probability equal to one to any of the hyper-graphs defined by the blue squares. It is impossible to have a probability equal to one assigned to each hyper-graph defined by the red rectangles.

\begin{figure}[ht]
      \centering
      \includegraphics[width=6cm]{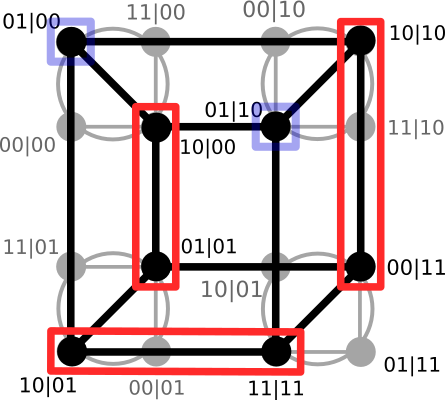}
      \caption{Exclusivity graph of the CHSH inequality in the form given in \cite{CSW:prl14}. The conditions of the CHSH paradox are represented by hyper-graphs defined by red rectangles and blue squares in the figure.}
      \label{fig:CHSH_C}
\end{figure}
\end{proof}

\section{CHSH Paradox with a Two-qubit State and Local Measurements}\label{sec:CHSHQ2D2D}
We now address the verification of the CHSH paradox introduced in \cref{eq:CHSH_paradox1,eq:CHSH_paradox2,eq:CHSH_paradox3,eq:CHSH_paradox4,eq:CHSH_paradox5}  with a quantum resource composed by a two-qubit state and local measurements.

\begin{theorem}
Given two parties, Alice and Bob, sharing a two-qubit entangled state $\vert \psi \rangle_{AB}$. There exist no possible set of local measurements $A_1$ and $A_2$ (resp. $B_1$ and $B_2$) for Alice's dichotomic measurements on her qubit (resp. Bob) that verify the CHSH paradox.
\end{theorem}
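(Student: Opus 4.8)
The plan is to exploit the defining feature of a single qubit: every dichotomic measurement has exactly two orthogonal outcomes, so each local eigenvector is a unit vector in $\mathbb{C}^2$ and, by the Born rule, each event $(a,b\vert x,y)$ corresponds to the overlap $\langle u_{x,a}\vert\otimes\langle w_{y,b}\vert\,\psi\rangle$, where $\{\vert u_{x,0}\rangle,\vert u_{x,1}\rangle\}$ and $\{\vert w_{y,0}\rangle,\vert w_{y,1}\rangle\}$ are Alice's and Bob's orthonormal eigenbases. I would first rewrite the four vanishing conditions $P(11|00)=P(11|10)=P(00|01)=P(01|11)=0$ as the orthogonality relations $\langle u_{0,1},w_{0,1}\vert\psi\rangle=\langle u_{1,1},w_{0,1}\vert\psi\rangle=\langle u_{0,0},w_{1,0}\vert\psi\rangle=\langle u_{1,0},w_{1,1}\vert\psi\rangle=0$, and then show that together they force $\vert\psi\rangle$ to be a product state. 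Since $\vert\psi\rangle$ is assumed entangled this is already a contradiction; equivalently, dropping the entanglement hypothesis, the product form makes $P(01|00)+P(01|10)=0$, violating the remaining condition.

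The key observation is that the conditions $P(11|00)=0$ and $P(11|10)=0$ share Bob's eigenvector $\vert w_{0,1}\rangle$. I would introduce the unnormalized conditional Alice vector $\vert\chi\rangle=(\mathbb{I}_A\otimes\langle w_{0,1}\vert)\vert\psi\rangle$; the two conditions say precisely that $\vert\chi\rangle$ is orthogonal to both $\vert u_{0,1}\rangle$ and $\vert u_{1,1}\rangle$. Because the space is two-dimensional, orthogonality to a single vector confines $\vert\chi\rangle$ to a line, giving a clean dichotomy. Either $\vert\chi\rangle=0$, in which case $\vert\psi\rangle$ has no support on $\vert w_{0,1}\rangle$ and is therefore the product state $\vert\phi\rangle_A\otimes\vert w_{0,0}\rangle$; or $\vert\chi\rangle\neq0$, which forces $\vert u_{0,0}\rangle$ and $\vert u_{1,0}\rangle$ to be proportional, i.e. Alice's two settings share an eigenbasis and are the same (commuting) observable. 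This is exactly the dichotomy anticipated in the main text: imposing these zeros does not permit an entangled state or forces compatibility of one party's measurements.

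In the second branch I would feed $\vert u_{1,0}\rangle\propto\vert u_{0,0}\rangle$ into the remaining two zeros: $P(00|01)=0$ and $P(01|11)=0$ then read $\langle u_{0,0},w_{1,0}\vert\psi\rangle=\langle u_{0,0},w_{1,1}\vert\psi\rangle=0$, so the conditional Bob vector $(\langle u_{0,0}\vert\otimes\mathbb{I}_B)\vert\psi\rangle$ is orthogonal to both $\vert w_{1,0}\rangle$ and $\vert w_{1,1}\rangle$ and hence vanishes, again making $\vert\psi\rangle=\vert u_{0,1}\rangle\otimes\vert\theta\rangle_B$ a product state. Thus every branch terminates in a product state. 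Evaluating the surviving condition on such a state closes the argument: in the first branch $P(01|00)=P(01|10)=0$ because $\langle w_{0,1}\vert w_{0,0}\rangle=0$, and in the second branch both vanish because $\langle u_{0,0}\vert u_{0,1}\rangle=0$ together with $\vert u_{1,0}\rangle\propto\vert u_{0,0}\rangle$. Hence $P(01|00)+P(01|10)=0$, contradicting the first CHSH-paradox condition.

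I expect the main obstacle to be the degenerate second branch. It is tempting to stop at the dichotomy produced by $P(11|00)=0$ and $P(11|10)=0$, but the case in which Alice's two measurements coincide does not by itself yield a product state; the argument must genuinely invoke the other two zeros to collapse the conditional Bob vector. I would also be careful that the dichotomy is exhaustive and that no conditional vector is silently assumed nonzero. The crucial structural input throughout is that each local measurement has exactly two orthogonal outcomes, so that ``orthogonal to one basis vector'' is equivalent to ``proportional to the other''; this is precisely the step that breaks down for entanglement measurements, in agreement with the positive construction given elsewhere in the paper.
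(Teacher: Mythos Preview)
Your argument is correct, and its opening move coincides with the paper's: both isolate the pair $P(11|00)=0$, $P(11|10)=0$, use that they share Bob's outcome $\vert w_{0,1}\rangle$, and obtain the dichotomy ``$\vert\psi\rangle$ is a product'' versus ``Alice's two settings have the same eigenbasis''. Where you diverge is in closing the second branch. The paper stops at the dichotomy and appeals to two facts established elsewhere: that verification of the paradox implies a CHSH violation, and that CHSH violation requires both an entangled state and incompatible local measurements; hence either horn of the dichotomy already rules out the paradox. You instead stay internal to the paradox conditions, feeding the compatibility $\vert u_{1,0}\rangle\propto\vert u_{0,0}\rangle$ into the remaining zeros $P(00|01)=0$ and $P(01|11)=0$ to force $(\langle u_{0,0}\vert\otimes\mathbb{I})\vert\psi\rangle=0$ and hence again a product state, after which $P(01|00)+P(01|10)=0$ follows directly. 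Your route is more self-contained and makes explicit the point you flag yourself, namely that compatibility of Alice's measurements alone does not yet yield a product state; the paper's route is shorter but leans on the external implication ``paradox $\Rightarrow$ CHSH violation $\Rightarrow$ entanglement and incompatibility''.
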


\begin{proof}
We show that in order to verify some of the conditions in the CHSH paradox imposes either to have a non-entangled state or compatible local measurements on one party side. Both are known to be required resource for non-locality. Hence if some of the conditions of the paradox imposes this, there would be no verification of the paradox. Let's consider the following generic two-qubit state:

\begin{equation}
	\vert \psi \rangle = a \vert 00 \rangle + e^{i \phi_b} b \vert 01 \rangle + e^{i \phi_c} c \vert 10 \rangle+ e^{i \phi_d} d \vert 11 \rangle,
\end{equation}

where $\vert a \vert^2 + \vert b \vert^2 + \vert c \vert^2 + \vert d \vert^2 = 1$ and $\phi_b,\phi_c,\phi_d \in [0,2\pi]$. One can show that the paradox cannot be verified from the two following conditions from \cref{eq:CHSH_paradox2,eq:CHSH_paradox4}

\begin{align}
& P(11|00)=0,\\
& P(11|10)=0.
\end{align}

In the CHSH scenario the two parties (Alice and Bob) have two dichotomic measurements denoted $A_0$ and $A_1$ for Alice and $B_0$ and $B_1$ for Bob. In particular, for a two-qubit state, each outcome of a measurement can be associated to a rank one projector. Without loss of generality, we can use the following eigenvectors:

\begin{align}
& \vert v_{A0} \rangle = \vert 0 \rangle, \\
& \vert v_{A1} \rangle = \cos(\frac{\theta_{A1}}{2})\vert 0 \rangle + e^{i \phi_{A1}} \sin(\frac{\theta_{A1}}{2})\vert 1 \rangle,\\
& \vert v_{B0} \rangle = \vert 0 \rangle,\\
& \vert v_{B1} \rangle = \cos(\frac{\theta_{B1}}{2})\vert 0 \rangle + e^{i \phi_{B1}} \sin(\frac{\theta_{B1}}{2})\vert 1 \rangle.
\end{align}

Then, the equation $P(11|00)=0$ imposes

\begin{equation}
	\vert \langle \psi \vert 11 \rangle \vert ^2 = 0,
\end{equation}

in other words, $d = 0$. Hence the state is

\begin{equation}
	\vert \psi \rangle = a \vert 00 \rangle + e^{i \phi_b} b \vert 01 \rangle + e^{i \phi_c} c \vert 10 \rangle,
\end{equation}

where $\vert a \vert^2 + \vert b \vert^2 + \vert c \vert^2 = 1$ and $\phi_b,\phi_c\in [0,2\pi]$.

Then, the equation $P(11|10)=0$ imposes

\begin{align}
	& \vert \langle \psi \vert (\sin(\frac{\theta_{A1}}{2})\vert 01 \rangle + e^{i \phi_{A1}} \cos(\frac{\theta_{A1}}{2})\vert 11 \rangle)\rangle \vert^2 = 0,\\
	& \sin(\frac{\theta_{A1}}{2})^2\vert \langle \psi \vert 01 \rangle\vert^2 = 0.
\end{align}

To satisfy this condition, there are two possibilities. If $\theta_{A1} = 0$ it implies $\vert v_{A0} \rangle = \vert v_{A1} \rangle$ which would not bring any non-local features in the measurements statistics. If $\vert \langle \psi \vert 01 \rangle\vert^2 = 0$ it implies $b=0$, hence the state would be non entangled. In both cases we can see that we cannot construct necessary conditions to observe non-locality.
\end{proof}

\section{New Contextual Inequality}

We review the graph theory approach of quantum correlations developed in \cite{CSW:arXiv10}. We define a graph $G(V,E)$, for which we associate to each vertex $i\in V$ an \textit{atomic} event, $(a\vert x)$, of a particular dichotomic measurement where the outcome $a=0$ or $a=1$ and the measurement is denoted by $x$. The edges represent the exclusivity and the compatibility of the measurements. Measurements are compatible if it is possible to perform them simultaneously. dichotomic measurements are exclusives if they cannot both have an outcome $1$, \emph{i.e.}, it is not possible that exclusive measurements have the outcome $1$ simultaneously.

Thus for all pairs of adjacent vertices, $(i,j) \in E$, the probability to have the measurement outcome $1$ assigned to both vertices is:
\begin{equation}\label{eq:hp1}
P(1,1\vert i,j)=0,
\end{equation}
where $p(a,b|c,d)$ represents the probability of getting results $a,b$ given measurement settings $c,d$.

We now consider how different classes of physical theories can assign outcomes on a given graph.

\subsection{Classical Theory}

In a non-contextual deterministic behavior, each dichotomic measurement leads to a predefined outcome $0$ or $1$ independently to the measurement's context. In this way each vertex $i\in V$ of the graph $G(V,E)$ has an assigned measurement outcome. Hence the probability that a particular event occurs will be equal to $0$ or $1$. As explained before, because the edges of the graph represent the exclusivity of the measurements, two adjacent vertices on the graph cannot simultaneously have the outcome $1$ assigned.

\subsection{Quantum Theory}

In quantum physics, the dichotomic measurements we use are represented by rank one projectors $\{\Pi_i=\vert v_i \rangle \langle v_i \vert\}$ with the normalized eigenvectors $\vert v_i \rangle$, where outcome $1$ is associated to projector $\Pi_i$ and outcome $0$ is associated to $\mathbb{I}-P_i$. This is equivalent to associating a unit vector $\vert v_i \rangle$ to the vertex $i$. In this framework, the exclusivity and compatibility relations between two measurements correspond to an orthogonality relation between the two unit vectors of the adjacent vertices. In graph theory, this corresponds to the orthonormal representation of a $\bar{G}$ of the graph.

\subsection{Contextuality Inequality}

The graph that we defined previously can be used to derive non-contextuality inequalities as follows.

For classically assigned outcome, we then arrive at the following inequality
\begin{equation}\label{eq:generalkcbs}
\sum_{i\in V}  P(1\vert i)\leq \alpha(G),
\end{equation}
where $\alpha(G)$ is the independence number of the graph $G$ (\emph{i.e.}, the maximum number of vertices that are not connected to each other).

It is sufficient to consider classical behavior to be restricted to the deterministic behaviors. Any classical behavior is obtained by a convex combination of deterministic behaviors. As a consequence, the maximum value possible in \cref{eq:generalkcbs} is obtained by a deterministic behavior. Hence, we have $P(1\vert i)= 0$ or $1$. To obtain the value of $\alpha(G)$, we can consider a coloring problem where the assigned $0/1$-probability values correspond to the two different colors. This can be seen as the maximum number of events with an outcome equals to $1$ simultaneously.

The maximum value of $\sum_{i\in V}  P(1\vert i)$ for a quantum resource is given by $\sum_{i\in V}  max \vert \langle v_i \vert \psi \rangle\vert ^2$. This corresponds to a well studied quantity in graph theory and is called Lovasz number or theta Lovasz where the eigenvectors with eigenvalue $1$ of the rank one projectors correspond to the so called orthonormal representation of the complement $\bar{G}$ of the graph $G$. It is also possible to derive the Lovasz number using Semi-Definite-Programming (SDP) \cite{Lovasz:79}.

\begin{align}\label{eq:beta}
\beta = \sum_{i\in V} P(1\vert i) \leq \alpha(G)\leq \vartheta(G)=\sum_{i\in V}  max \vert \langle v_i \vert \psi \rangle\vert ^2.
\end{align}

\subsection{CHSH Contextuality Inequality}\label{sec:CHSHCSW10}

We can use the exclusivity graph of the CHSH inequality to build a contextuality inequality for a different scenario. Instead of assigning events of two outcomes of two compatible measurements to the vertices of the graph we assign atomic events. Hence, the number of measurements is now equal to 8 and we call $\{A_i\}$ the set of observables with $\pm 1$ outcomes assigned to the vertices $\{i\}$. The graph with the new labels is represented in Fig.~\ref{fig:CHSH_contineq}.

\begin{figure}[ht]
      \centering
      \includegraphics[width=5.2cm]{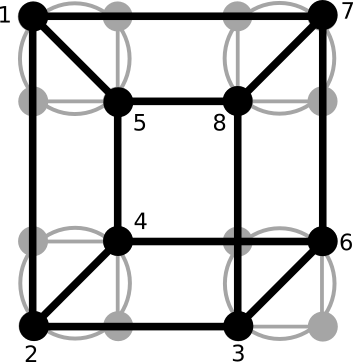}
      \caption{Relabeling of the vertices of the CHSH graph.}
      \label{fig:CHSH_contineq}
\end{figure}

We present our inequality in the following theorem:

\begin{theorem}
Given the exclusivity graph of the CHSH inequality, $G_\text{CHSH}(V_\text{CHSH},E_\text{CHSH})$ and a set of dichotomic measurements $\{A_i\}$ with $\pm 1$ outcomes values assigned to the vertices $\{i\}$, such that $\forall (i,j) \in E$ $A_i$ and $A_j$ are commuting. Any Non-Contextual Hidden Variable (NCHV) theory is bounded by the following inequality:
\begin{align}\label{eq:newcontineq}
\sum_{(i,j)\in E_{\rm CHSH}} \langle A_i A_j \rangle  \underset{{\rm NCHV}}{\geq} -6,
\end{align}
\end{theorem}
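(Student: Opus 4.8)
The plan is to reduce \cref{eq:newcontineq} to a purely combinatorial fact about $\pm 1$ assignments on $G_{\rm CHSH}$ and then exploit the graph's regularity together with its independence number. Since any NCHV behaviour is a convex mixture of deterministic value assignments, and the left-hand side is linear in the correlators $\langle A_i A_j\rangle$ (each of which is linear in the underlying probabilities), it suffices to establish the bound for a single deterministic assignment $A_i \in \{+1,-1\}$; the convex combination then inherits it. For such an assignment each $\langle A_i A_j\rangle$ collapses to the product $A_i A_j = \pm 1$, so the target becomes $\sum_{(i,j)\in E_{\rm CHSH}} A_i A_j \geq -6$.

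First I would encode the deterministic outcomes by indicator variables $x_i = (1-A_i)/2 \in \{0,1\}$, so that $x_i = 1$ marks the vertices carrying the exclusive outcome ``$1$''. The crucial input from the graph formalism is the exclusivity relation \cref{eq:hp1}: for every edge $(i,j)\in E_{\rm CHSH}$ the two adjacent measurements cannot both return $1$. Because this probability vanishes at the level of the mixture and is a sum of non-negative terms, it vanishes for each deterministic component, giving $x_i x_j = 0$ on every edge; equivalently $\{i : x_i = 1\}$ is an independent set of $G_{\rm CHSH}$. Substituting $A_i = 1 - 2x_i$ yields $A_i A_j = 1 - 2x_i - 2x_j + 4 x_i x_j = 1 - 2x_i - 2x_j$ on each edge, hence
\begin{equation}
\sum_{(i,j)\in E_{\rm CHSH}} A_i A_j = |E_{\rm CHSH}| - 2\sum_{i} d_i\, x_i,
\end{equation}
where $d_i$ is the degree of vertex $i$.

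The second ingredient is the explicit structure of $G_{\rm CHSH}$: reading off the orthogonality pattern of the vertices (equivalently of the vectors $\vert v_1\rangle,\dots,\vert v_8\rangle$) shows that it is a $3$-regular graph on $8$ vertices with $|E_{\rm CHSH}| = 12$. With $d_i = 3$ for all $i$ the identity simplifies to $\sum_{(i,j)} A_i A_j = 12 - 6\sum_i x_i$. Finally I would invoke the independence-number bound \cref{eq:generalkcbs}, namely $\sum_i x_i \leq \alpha(G_{\rm CHSH})$, together with $\alpha(G_{\rm CHSH}) = 3$, which is exactly the classical CHSH bound reflected in \cref{eq:schsh}. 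Hence $\sum_{(i,j)} A_i A_j \geq 12 - 6\cdot 3 = -6$, and the bound extends to all NCHV behaviours by convexity.

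The main obstacle is not the final arithmetic but correctly pinning down the two graph invariants and, above all, recognising that the exclusivity constraint $x_i x_j = 0$ is indispensable. Dropping it and minimising $\sum A_i A_j$ over unconstrained $\pm 1$ assignments is a MaxCut problem, and since $G_{\rm CHSH}$ admits a cut of $10$ of its $12$ edges one would reach values as low as $-8$, below the claimed $-6$. The bound therefore genuinely rests on the orthogonality encoded by the edges (the graph is non-bipartite, containing the two pentagons), and the proof must route through the independence number rather than the raw quadratic form. Establishing $\alpha(G_{\rm CHSH}) = 3$ can be done directly: an independent set of size $4$ in a $3$-regular $8$-vertex graph would force all $12$ edges across the bipartition and hence make the graph bipartite, contradicting the presence of odd cycles.
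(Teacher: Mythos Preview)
Your proof is correct and follows essentially the same route as the paper: both use the exclusivity constraint $p(1,1\vert i,j)=0$ to rewrite each correlator as $1-2[p(1\vert i)+p(1\vert j)]$, sum over the twelve edges of the $3$-regular graph to obtain $12-6\sum_i p(1\vert i)$, and then apply $\alpha(G_{\rm CHSH})=3$. The only cosmetic difference is that you first reduce to deterministic assignments by convexity while the paper carries general probabilities throughout; your MaxCut remark and the bipartiteness argument for $\alpha=3$ are correct additions that the paper does not spell out.
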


\begin{proof}
The theorem can be shown by expressing the NCHV bound by the graph properties. In this precise case, it is the same bound as in the CHSH inequality (as a consequence of using the same exclusivity graph). To this end we first express the expectation value in terms of the probabilities as following (for consistency with the rest of the letter we denote by "0" the outcome "1" and by "1" the outcome "-1")
\begin{align}
\langle A_i A_j \rangle  & = p(0,0\vert i,j) + p(1,1\vert i,j) - p(0,1\vert i,j) - p(1,0\vert i,j),\\
\langle A_i A_j \rangle  & = 1 - 2[p(0,1\vert i,j) + p(1,0\vert i,j)],
\end{align}
using the marginalization $p(1\vert j) = p(0,1\vert i,j) + p(1,1\vert i,j)$ and the exclusivity principle  $p(1,1\vert i,j) = 0$
\begin{align}
\langle A_i A_j \rangle  & = 1 - 2[p(1\vert j) + p(1\vert i)],
\end{align}
By summing over the all the compatible measurements we have
\begin{align}
\sum_{(i,j)\in E_{\rm CHSH}} \langle A_i A_j \rangle & =  12 - 6 \sum_{i\in V_{\rm CHSH}} p(1\vert i).
\end{align}
From \cite{CSW:arXiv10}, we have $\sum_{i\in V_{\rm CHSH}} p(1\vert i) \leq \alpha({G_\text{CHSH}}) = 3$. Hence
\begin{align}
\sum_{(i,j)\in E_{\rm CHSH}} \langle A_i A_j \rangle & \geq  12 - 6 ~\alpha({G_\text{CHSH}}),\\
\sum_{(i,j)\in E_{\rm CHSH}} \langle A_i A_j \rangle & \geq  - 6.
\end{align}
\end{proof}

In this scenario (see Fig.~\ref{fig:CHSH_contineq}) the CHSH Paradox takes the following form
\begin{align}\label{eq:CHSH_paradox_cont}
& P(1\vert 1)+P(1\vert 8)>0,\\
& P(1\vert 2)+P(1\vert 3)=1,\\
& P(1\vert 4)+P(1\vert 5)=1,\\
& P(1\vert 6)+P(1\vert 7)=1.
\end{align}
We finally can link the verification of the paradox and the violation of the inequality as follows

\begin{corollary}
The verification of the CHSH Paradox in the form of \cref{eq:CHSH_paradox_cont} implies the violation the following inequality:
\begin{align}\label{eq:newcontineq}
\sum_{(i,j)\in E_{\rm CHSH}} \langle A_i A_j \rangle  \underset{{\rm NCHV}}{\geq} -6.
\end{align}
\end{corollary}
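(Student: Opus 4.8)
The plan is to obtain this corollary as an immediate consequence of the identity already established in the proof of the preceding theorem, so that essentially no new machinery is needed. Recall that in the contextuality scenario of \cref{fig:CHSH_contineq} every edge $(i,j)\in E_{\rm CHSH}$ forces the exclusivity relation $p(1,1\vert i,j)=0$, and from this together with marginalization one derives the key identity
\begin{equation}
\sum_{(i,j)\in E_{\rm CHSH}} \langle A_i A_j \rangle = 12 - 6 \sum_{i\in V_{\rm CHSH}} p(1\vert i),
\end{equation}
which holds for \emph{any} graph-compatible probability assignment, independently of the NCHV independence-number bound. Thus the whole content of the corollary is transferred to controlling the single scalar $\sum_{i\in V_{\rm CHSH}} p(1\vert i)$ under the paradox hypotheses in \cref{eq:CHSH_paradox_cont}.

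The first step would be to evaluate that sum by grouping the eight vertices exactly as they appear in the paradox conditions. Using the three saturation equalities and the positivity condition, this would give
\begin{align}
\sum_{i=1}^{8} p(1\vert i) &= \bigl[P(1\vert 1)+P(1\vert 8)\bigr] + \bigl[P(1\vert 2)+P(1\vert 3)\bigr] \notag\\
&\quad + \bigl[P(1\vert 4)+P(1\vert 5)\bigr] + \bigl[P(1\vert 6)+P(1\vert 7)\bigr] \notag\\
&= 3 + \bigl[P(1\vert 1)+P(1\vert 8)\bigr],
\end{align}
since each of the last three bracketed pairs equals $1$. Substituting this back into the identity would then yield
\begin{equation}
\sum_{(i,j)\in E_{\rm CHSH}} \langle A_i A_j \rangle = 12 - 6\bigl(3 + \bigl[P(1\vert 1)+P(1\vert 8)\bigr]\bigr) = -6 - 6\bigl[P(1\vert 1)+P(1\vert 8)\bigr].
\end{equation}
Because the first paradox condition asserts $P(1\vert 1)+P(1\vert 8)>0$, the right-hand side is strictly below $-6$, which is precisely the violation of the NCHV bound $\sum_{(i,j)}\langle A_i A_j\rangle \geq -6$ that the corollary claims.

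I do not expect a genuine obstacle here, as the substance of the argument lives in the preceding theorem; the corollary is a direct plug-in. The only points deserving care are, first, confirming that the identity relating $\sum \langle A_i A_j \rangle$ to $\sum p(1\vert i)$ rests solely on marginalization and the edge exclusivity $p(1,1\vert i,j)=0$ (so it remains valid for the probability profile prescribed by the paradox, not just for NCHV behaviors), and second, checking that the three saturation equalities are mutually compatible with the orthogonality structure of $G_{\rm CHSH}$ so that the identity indeed applies. The strictness of the conclusion is inherited cleanly from the strict positivity in the first paradox condition, so no limiting or continuity argument is required.
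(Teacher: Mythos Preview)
Your proposal is correct and follows essentially the same approach as the paper: both reduce the corollary to the identity $\sum_{(i,j)}\langle A_iA_j\rangle = 12 - 6\sum_i p(1\vert i)$ from the preceding theorem, then use the three saturation equalities to get $\sum_i p(1\vert i)=3+[P(1\vert 1)+P(1\vert 8)]>3$. Your write-up is in fact slightly more explicit in carrying the substitution through to $-6-6[P(1\vert 1)+P(1\vert 8)]$, but the argument is the same.
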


\begin{proof}
The inequality can be rewritten as $\sum_{i\in V_{\rm CHSH}} p(1\vert i) \leq \alpha({G_\text{CHSH}}) = 3$. If the paradox in \cref{eq:CHSH_paradox_cont} is verified, the three last equations tells us that we have $\sum_{i\in V_{\rm CHSH}} p(1\vert i)= 3 + P(1\vert 1)+P(1\vert 8)$. Finally, because of the first equation one can conclude that the verification of the paradox implies $\sum_{i\in V_{\rm CHSH}} p(1\vert i) > 3$.
\end{proof}

\end{document}